\newcommand{\cl}{\textnormal{codelength}}
\newcommand{\sat}{\textnormal{sat}}
\newcommand{\pweak}[1]{\ensuremath{P^{\rm weak}_{#1}}}
\newcommand{\pstrong}[1]{\ensuremath{P^{\rm strong}_{#1}}}
\newcommand{\prweak}[1]{\ensuremath{\vdash^{\rm weak}_{#1}}}
\newcommand{\prstrong}[1]{\ensuremath{\vdash^{\rm strong}_{#1}}}
\newcommand{\scalar}[2]{\ensuremath{\langle #1, #2\rangle}}
\newcommand{\ignore}[1]{}
\newcommand{\E}{\mathop{\mathbb{E\/}}}
\newcommand{\ftwo}{\ensuremath{\mathbb{F}_2}}
\newtheorem{observation}[theorem]{Observation}
\newcommand{\rank}{\ensuremath{\textnormal{rank}}}
\newcommand{\floor}[1]{\left\lfloor #1 \right\rfloor}
\newcommand{\ceil}[1]{\left\lceil #1 \right\rceil}
\renewcommand{\vec}[1]{\mathbf{#1}}
\newcommand{\pfrac}[2]{\left(\frac{#1}{#2}\right)}
\newcommand{\ppsz}{\textup{\texttt{ppsz}}}
\newcommand{\encode}{\textup{\texttt{encode}}}
\renewcommand{\r}{\mathbf{r}}
\newcommand{\x}{\mathbf{x}}
\newcommand{\y}{\mathbf{y}}
\renewcommand{\u}{\mathbf{u}}
\newcommand{\boldv}{\mathbf{v}}
\newcommand{\w}{\mathbf{w}}
\newcommand{\z}{\mathbf{z}}
\renewcommand{\a}{\mathbf{a}}
\newcommand{\N}{\ensuremath{\mathbb{N}}}
\newenvironment{oneshot}[1]{\@begintheorem{#1}{\unskip}}{\@endtheorem}
\definecolor{violet}{rgb}{1.0,0.0,1.0}
\definecolor{darkgreen}{rgb}{0,0.3,0}
\newif{\ifhidecomments}
 \newcommand{\dominik}[1]{}
 \newcommand{\navid}[1]{}
 \newcommand{\pavel}[1]{}
 \newcommand{\dominik}[1]{{\color{blue}\textless dominik\textgreater\ #1\ \textless/dominik\textgreater}}
 \newcommand{\navid}[1]{{\color{violet}\textless navid\textgreater\ #1\ \textless/navid\textgreater}}
 \newcommand{\pavel}[1]{{\color{darkgreen}\textless pavel\textgreater\ #1\ \textless/pavel\textgreater}}
\newcommand{\nullv}{\mathbf{0}}
\renewcommand{\b}{\mathbf{b}}
\newcommand{\e}{\mathbf{e}}
\newcommand{\F}{\mathbb{F}}
\newcommand{\ttconstant}{60}
\newcommand{\llconstant}{5}
\title{Tighter Hard Instances for PPSZ}
\author{Pavel Pudl{\'a}k\thanks{The author is supported by the grant P202/12/G061 of GA\v{C}R.}\inst{1}  \and Dominik Scheder\thanks{Dominik Scheder gratefully acknowledges
support by the National Natural Science Foundation of China under grant 61502300.}\inst{2}
\and Navid Talebanfard\thanks{Part of the work was done while Navid Talebanfard was with Tokyo Institute of Technology and visiting Saint Petersburg State University during the special semester in complexity.}\inst{3}}
\institute{The Czech Academy of Sciences \and Shanghai Jiaotong University \and
Saarland University and the Cluster of Excellence, MMCI}
\begin{document}

\maketitle

\begin{abstract}
We construct uniquely satisfiable $k$-CNF formulas that are hard for the algorithm PPSZ. 
Firstly, we construct graph-instances on which ``weak PPSZ'' has savings of at most $(2 + \epsilon) / k$;
the {\em saving} of an algorithm on an input formula with $n$ variables is the largest
$\gamma$ such that the algorithm succeeds (i.e. finds a satisfying assignment) with probability
at least $2^{ - (1 - \gamma) n}$. Since PPSZ (both weak and strong) is known to have savings of at least
$\frac{\pi^2 + o(1)}{6k}$, this is optimal up to the constant factor. 
In particular, for $k=3$, our upper bound is $2^{0.333\dots n}$, which is fairly close to the lower bound $2^{0.386\dots n}$ of Hertli [SIAM J. Comput.'14]. We also construct instances based on linear systems over $\F_2$ for which {\em strong} PPSZ has savings of at most $O\pfrac{\log(k)}{k}$. This is only a $\log(k)$ factor away from 
the optimal bound. Our constructions improve previous savings upper bound of $O\pfrac{\log^2(k)}{k}$ due to Chen et al. [SODA'13].
\end{abstract}

\dominik{SODA submission format: The submission, excluding title page, diagrams, bibliography and appendix, must not exceed 10 pages.  (Authors should feel free to send submissions that are significantly shorter than 10 pages.)\\
Deadlines:
\begin{itemize}
\item July 6, 2016, 4:59 PM EDT - Deadline - Short Abstract Submission and Paper 
Registration Deadline
\item July 13, 2016, 4:59 PM EDT - Deadline - Full Paper Submission
\end{itemize}
} 

\navid{Navid can also make comments by typing something like
\textbackslash navid\{Navid can also make comments by typing something like
\textbackslash navid\{Navid can\dots \} \dots \}.}

\pavel{And so does Pavel.}

\newpage
\setcounter{page}{1}

\section{Introduction}

The $k$-SAT problem is one of the most fundamental NP-complete problems: given a $k$-CNF formula decide if there is an assignment to the variables that satisfies all the clauses. While a simple exhaustive search algorithm solves the problem, attempting to beat this trivial approach remains an active direction (see e.g. \cite{PPZ99,S99,D02,PPSZ05}). Formalizing the true hardness of $k$-SAT, Impagliazzo and Paturi \cite{IP01} presented two hypotheses: {\it Exponential Time Hypothesis (ETH)} which rules out any $2^{o(n)}$ time algorithm for $k$-SAT where $n$ is the number of variables and {\it Strong Exponential Time Hypothesis (Strong ETH)} which says that for any $\epsilon > 0$ there exists $k > 0$ such that $k$-SAT cannot be solved in time $2^{(1 - \epsilon)n}$. Both ETH and Strong ETH have successfully been used to explain the hardness of many other problems; under ETH one can prove tight lower bounds for many fixed parameter tractable problems (see \cite{LMS11}), and under Strong ETH several lower bounds for polynomial time solvable problems are proved (see e.g. \cite{BI15}). However the validity of both these hypotheses remains a matter of mystery and in particular regarding Strong ETH no consensus seems to be within reach any time soon.

In this paper we focus on Strong ETH and the problem of constructing hard instances for known classes of algorithms for $k$-SAT. 
Paturi, Pudl{\'a}k, Saks and Zane \cite{PPSZ05} presented the currently best known randomized algorithm for $k$-SAT. The algorithm roughly does the following: pick uniformly at random a variable $x$ from the input formula. Try to infer the value of $x$ using some sound heuristic. If this check fails, pick a random value for $x$. Set $x$ to be this value and repeat. A sound heuristic is an algorithm $P$ that receives a formula $F$ and a variable $x$ such that $P(F, x) = 0$ implies $F \models (x = 0)$ and $P(F, x) = 1$ implies $F \models (x = 1)$. We will consider two heuristics, $\pweak{w}$ amd $\pstrong{w}$, where $\pweak{p}$ checks if the value of $x$ can be derived from any set of $w$ clauses of $F$, and $\pstrong{w}$ checks if the value of $x$ can be derived by a width-$w$ resolution derivation from $F$. Note that if $F$ is $O(1)$-CNF then both $\pweak{w}$ and $\pstrong{w}$ run in subexponential time as long as $w = o(\frac{n}{\log n})$. The first result showing that even simple sound heuristics can yield non-trivial savings over exhaustive search was proved by Paturi, Pudl{\'a}k and Zane.

\begin{theorem}[\cite{PPZ99}]
\label{thm:ppz}
Let $F$ be a $k$-CNF formula on $n$ variable. Then $$\Pr[\ppsz(F, \pweak{1}) \in \sat(F)] \ge 2^{-(1 - \frac{1}{k})n}.$$
\end{theorem}

Naturally one can ask if stronger heuristics can improve the success probability. It was indeed shown in the following theorem that using  $\omega(1)$-width resolution yields improvements.

\begin{theorem}[\cite{PPSZ05}]
\label{thm:ppsz}
Let $F$ be a $k$-CNF formula on $n$ variables. Then $$\Pr[\ppsz(F, \pstrong{\omega(1)}) \in \sat(F)] \ge 2^{-(1 - \frac{\pi^2}{6k}-o(1))n}.$$ 
\end{theorem}

Later Hertli \cite{H14} showed among other things that even $\pweak{\omega(1)}$ yields the same improvement over the trivial $\pweak{1}$.

\begin{theorem}[\cite{H14}]
\label{thm:hertli}
Let $F$ be a $k$-CNF formula on $n$ variables. Then $$\Pr[\ppsz(F, \pweak{\omega(1)}) \in \sat(F)] \ge 2^{-(1 - \frac{\pi^2}{6k}-o(1))n}.$$
\end{theorem}

The first construction of hard instances for PPSZ was given by Chen, Scheder, Talebanfard and Tang \cite{CSTT13}. These instances are hard even for $\pstrong{\omega(1)}$.

\begin{theorem}[\cite{CSTT13}]
\label{thm:CSTT}
For any large enough $k, n > 0$ there are $k$-CNF formulas $F$ such that $$\Pr[\ppsz(F, \pstrong{n/k}) \in \sat(F)] \le 2^{-(1 - O(\log^2k/k))n}.$$ 
\end{theorem}

In this paper we improve this upper bound. For $\pweak{\omega(1)}$ we give completely different constructions for which we can show that the success probability of PPSZ is essentially tight. For $\pstrong{\omega(1)}$ we can improve the asymptotics of $k$ from $O(\log^2 k/k)$ to $O(\log k/k)$.

\begin{theorem}
\label{thm:weak}
For every $k\geq 3$ and every large enough $n$ there exists a uniquely satisfiable $k$-CNF formula $F$ on $n$ variables such that
\begin{enumerate}
\item  $\Pr[\ppsz(F, \pweak{w}) \in \sat(F)] \le 2^{-(1 - \frac{2}{k})n}$ for some $w = \Theta(\log n)$,
\item for any $\epsilon > 0$, $\Pr[\ppsz(F, \pweak{w}) \in \sat(F)] \le 2^{-(1 - \frac{2(1 + \epsilon)}{k})n}$ for some $w = n^{\Theta(\epsilon)}$ .
\end{enumerate}
\end{theorem}

In particular, for $k=3$, our upper bound is $2^{0.333\dots n}$, which is fairly close to the lower bound $2^{0.386\dots n}$ of~\cite{H14}.

\begin{theorem}
For every $k\geq 3$ and every large enough $n$, there exists a $k$-CNF
formula $F$ on $n$ variables with a unique satisfying assignment such
that $\ppsz(F,\pstrong{n/k})$ is successful with probability at most
$2^{ (-1 + \epsilon) n}$, where $\epsilon = O\pfrac{\log(k)}{k}$.
  \label{theorem-linear-strong}
\end{theorem}
\dominik{We should mention the word ``linear'' around here.}

The analysis of our hard instances is based on an encoding view of PPSZ. Given a formula $F$ on variables $x_1, \ldots, x_n$ and a satisfying assignment $\mathbf{b}$, PPSZ produces an encoding of the assignment with respect to a given permutation $\pi$ of the variables in the following way.

\begin{algorithm}[H]
$\encode (\mathbf{b}, \pi, F, P)$

$\mathbf{c} :=$ empty string\\
 \For{$i = 1, \ldots, n$}{
  \If{$P(F, x_{\pi(i)}) \not \in \{0,1\}$}{
   append $\mathbf{b}_{\pi(i)}$ to $\mathbf{c}$\;
   }
   
   $F := F|_{x_{\pi(i)} \rightarrow \mathbf{b}_{\pi(i)}}$\;
}
 
\end{algorithm}
It is not hard to see that we can express the success probability of PPSZ in terms of expected code lengths as follows.

\begin{lemma}[\cite{PPSZ05}]
Let $F$ be a $k$-CNF and let $P$ be a sound heuristic. We have $\Pr[\ppsz(F, P) \in \sat(F)] = \sum_{\b \in \sat(F)} \E_{\pi}2^{-|\encode(\b, \pi, F, P)|}$.
\end{lemma}

Thus our goal is to construct instances having a few satisfying assignments,
all of which admitting only long encodings. Defining
the optimal encoding length a satisfying assignment $\b$ to be 
$\cl(F,P,\b) := \min_{\pi} |\encode(\b,\pi,F,P)|$ we get
\begin{eqnarray*}
\Pr[\ppsz(F, P) \in \sat(F)] 
&\le&  \sum_{\b \in \sat(F)} 2^{-\cl(\b, F, P)} \ .
\end{eqnarray*}

The formulas in Theorem \ref{thm:weak} and Theorem \ref{theorem-linear-strong} have 
the unique satisfying assignment $\nullv$. Thus our goal will be to 
prove a lower bound on $\cl(F,P,\nullv)$.
%

\section{Notation and Preliminaries}

Let $F$ be a CNF formula with variable set $V$. A restriction (or partial assignment)
is a partial function $\rho: V \rightarrow \{0,1\}$. For $\b \in \{0,1\}^n$, the 
notation $S \mapsto \b$ is the restriction that maps $x \in S$ to $\b_x$
and  is undefined on $V \setminus S$. By $F|_{\rho}$ we denote the formula
arising from fixing the variables according to $\rho$ and then simplifying 
the resulting formula by removing unsatisfied literals and satisfied clauses.
For a matrix $A \in \F_2^{m \times n}$ and $U \subseteq [n]$ we denote
by $A_U$ the $(m \times |U|)$ submatrix formed by taking all columns indexed by
some $i \in U$. By $\nullv$ we denote the all-$0$-assignment as well as the
null vector in $\F_2^n$.

We will identify a vector $\a \in \F_2^{n}$ with its support $\{i \in [n] \  | \ a_i = 1\}$. 
Thus we will liberally write things like $\a \cup \b$, $\a \setminus \b$, $|\a|$, and so on.\\
\medskip

We list some key observations relating PPSZ and resolution. The (easy) proofs
can be found in the appendix.

\begin{definition}
Let $F$ be a formula with a unique satisfying assignment, which 
without loss of generality is $\nullv$, and let $P$ be a proof heuristic.
We say $F$ {\em collapses under $P$} if there is an ordering
$x_1, \dots, x_n$ of the variables in $F$ such that
$F|_{(x_1,\dots,x_{i-1} \mapsto 0)} \vdash_P (x_i = 0)$ for all $1 \leq i \leq n-1$.
\end{definition}

\begin{proposition}
 If $\cl(F, P,\b) \leq m$ then there is a set 
 $S$ of $m$ variables such that $F|_{S \mapsto \b}$ collapses under $P$.
 \label{prop-ppsz-collapse}
\end{proposition}

The next lemma states that if $F$ collapses ``sequentially'' under bounded-width
resolution, then it collapses ``simultaneously'' as well.

\begin{proposition}
Let $F$ be a $k$-CNF formula with the unique satisfying assignment $\nullv$, 
and let $w \geq k$.
  If $F$ collapses under $\pstrong{w}$ then
 $F \prstrong{w} (x = 0)$ for all variables $x$ of $F$.
 \label{prop-resolution-collapse}
\end{proposition}

The next proposition connects logical implication and collapse under $\pweak{}$ to linear algebra.
\begin{proposition}
 Let $A \in \F_2^{m \times n}$ and $F_A$ be its linear formula. 
 If $\prweak{w}(F,x_i) \in \{0,1\}$ then there is a row vector $\r \in \F_2^m$ of Hamming
 weight at most $w$ such that $\r \cdot A = \e_i$.
 \label{prop-weak-row-vector}
\end{proposition}

%



\section{Hard Instances for Weak PPSZ: Proof of Theorem \ref{thm:weak}}
\label{section-tseitin}

The  construction in this section is based on satisfiable Tseitin formulas. Unsatisfiable Tseitin formulas are extensively studied in proof complexity (see e.g. \cite{TS,U87}). 
 Given a graph $G = (V, E)$, the girth of $G$ is defined as the size of the shortest cycle in $G$. We denote this by $g(G)$.  For every pair $e, e' \in E(G)$ of edges we define the distance between $e$ and $e'$ by $\min_{u \in e, v \in e'}\{d(u, v)\}$.  We will need graphs of bounded degree with large girth. According to a well-known result of Erd\H{o}s and Sachs~\cite{ES63}, for every $k\geq 3$ and every sufficiently large $n$, there exists a $k$-regular graph with $n$ vertices and girth $>\log_{k-1}n$. 
Explicit constructions for infinitely many values of $k$ with a better constant are also known~\cite{LPS88}.


Given a degree-$k$ graph $G = (V, E)$,
 the Tseitin formula $T(G)$ is defined as follows. For each edge $e \in E$, there is a propositional variable $x_e$. For each vertex $v \in V$ we add the constraint
 $\sum_{e \ni v} x_e  = 0 \pmod 2$, which can be written as a conjunction of $2^{k-1}$ $k$-clauses.%
\footnote{The original Tseitin tautologies express the fact that the system $\sum_{e \ni v} x_e  = a_v \pmod 2$ is unsatisfiable if $\sum_va_v=1 \pmod 2$.}%
%
In our formulas we assume that the girth of the graph is at least $\log_{k-1}n$, where $n$ denotes the number of vertices. Furthermore, 
we add a clause $\neg x_e \vee \neg x_{e'}$ for each pair of edges $e, e'$ of distance at least $\frac{g(G)}{2} - 1$ (which is $\geq\frac{1}{2}\log_{k - 1}n - 1$). We call these clauses {\it bridges} and we denote the conjunction of all of them by $B$. Define $F_G := T(G) \wedge B$. Note that $F_G$ has $N = kn/2$ variables.

The following proposition follows readily (see the appendix for a proof.)

\begin{proposition}
\label{prp:tseitin-unique}
$F_G$ has the unique satisfying assignment $\nullv$.
\end{proposition}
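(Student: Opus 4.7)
The plan is to verify that $\nullv$ satisfies $F_G$---immediate, since every Tseitin parity constraint reduces to $0 \equiv 0 \pmod 2$ and every bridge is satisfied whenever both of its negated literals are true---and then to establish uniqueness by a direct combinatorial argument on the subgraph carved out by a satisfying assignment. Given any satisfying $\alpha$, I identify it with $H := \{e \in E : \alpha(x_e) = 1\}$; the Tseitin parity equations at each vertex $v$ say precisely that every vertex has even degree in $H$, i.e., $H$ is an \emph{even subgraph} of $G$. The goal is to show $H = \emptyset$.

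Assume for contradiction that $H \neq \emptyset$. Since every vertex incident to $H$ has degree at least $2$ in $H$, a standard walk-and-repeat argument produces a cycle $C \subseteq H$, and by the girth hypothesis $|C| \geq g := g(G)$. Writing $C = v_0 v_1 \cdots v_{L-1} v_0$ with edges $e_i = \{v_i, v_{i+1}\}$, I would take $e := e_0$ and $e' := e_{\lfloor g/2 \rfloor}$ and show that their $G$-distance is at least $g/2 - 1$. The corresponding bridge clause $\neg x_e \vee \neg x_{e'}$ then lies in $B$ and is falsified by $\alpha$ (both literals are set to $1$), giving the desired contradiction, and hence $\alpha = \nullv$.

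The geometric lower bound $d_G(e_0, e_{\lfloor g/2 \rfloor}) \geq g/2 - 1$ is the only nontrivial step and I expect it to be the main obstacle. I plan to invoke the standard large-girth inequality: if $P \neq Q$ are two paths between the same pair of vertices in $G$, then $P \triangle Q$ is a nonempty even subgraph, hence contains a cycle of length at least $g$, so $|P| + |Q| \geq g$. Applying this to each of the four endpoint pairs of $(e_0, e_{\lfloor g/2 \rfloor})$, paired against the short arc of $C$ connecting them (of length at most $\lfloor g/2 \rfloor$), forces every such graph distance to be at least $g/2 - 1$. The tight case is $L = g$, where the short arc has length exactly $g/2 - 1$ and realizes the $G$-distance by the uniqueness of shortest paths between vertices at distance $< g/2$ in a girth-$g$ graph; for $L > g$ the bound is loose, since both arcs between the two chosen edges are strictly longer than $g/2$.
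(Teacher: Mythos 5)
Your proof is correct and follows what is essentially the only natural route: a satisfying assignment corresponds to an even subgraph $H$, any nonempty even subgraph contains a cycle $C$ of length $\geq g$, and the girth hypothesis forces two edges on $C$ to be far apart in $G$, violating a bridge clause. The key girth inequality you invoke ($P \neq Q$ paths with the same endpoints implies $|P|+|Q| \geq g$, since $P \triangle Q$ is a nonempty even subgraph and hence contains a cycle) is exactly the right tool, and your case analysis (shortest path equals the short arc, giving $d = a$, versus shortest path differs, giving $d \geq g - a$) is sound.

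One small caution on the choice $j = \lfloor g/2 \rfloor$ and the claimed bound $d_G(e_0, e_j) \geq g/2 - 1$. If you take the threshold literally for odd $g$ (an integer distance $\geq g/2-1$ means $\geq (g-1)/2$), then the endpoint pair $(v_1, v_{\lfloor g/2\rfloor})$ sits at arc-distance $\lfloor g/2\rfloor - 1 = (g-3)/2$ along $C$, which is realized as the graph distance (the arc is a shortest path since $(g-3)/2 < g/2$), and $(g-3)/2 < (g-1)/2$, so the bridge would not actually be present. This off-by-one is really an imprecision inherited from the paper's own informal statement of the bridge threshold ``distance at least $g/2 - 1$''; under the (charitable and presumably intended) reading that the threshold is $\lfloor g/2 \rfloor - 1$, your choice of $j$ works for both parities, and your analysis goes through for all cycle lengths $L \geq g$. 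It would be cleaner to state explicitly which integer threshold you are using and then verify the four endpoint distances against it, rather than leaving ``$g/2-1$'' as a possibly non-integer quantity, but this is a cosmetic fix, not a gap in the idea.
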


%
%
%
%

We will consider PPSZ with $\pweak{w}$ when $w = O(\log n)$

\begin{lemma}\label{lm:acyclic}
In $F_G$ any encoding of the all-0 assignment has length at least $(1 - \frac{2}{k})N$ under $w \le \frac{1}{2}\log_{k-1}n - 1$.
\end{lemma}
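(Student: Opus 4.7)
The plan is to show that, for every processing order $\pi$, the set $F_\pi$ of $\pweak{w}$-forced edges is acyclic as a subgraph of $G$; a forest on $n$ vertices has at most $n - 1$ edges, so this yields encoding length at least $N - (n-1) > N - n = (1 - \tfrac{2}{k}) N$.

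First I would observe that the bridge clauses $\neg x_e \vee \neg x_{e'}$ are useless for forcing: they become unit only once some variable is set to $1$, and we only ever set variables to $0$ while encoding $\nullv$. So every $\pweak{w}$-derivation of $x_e = 0$ is driven by the vertex parities of $T(G)$, and in the usual linear-algebra view corresponds to a subset $U \subseteq V$ with $\partial U \cap (E \setminus S_{<e}) = \{e\}$, where $S_{<e}$ denotes the set of edges preceding $e$ in $\pi$. Combining $w \le \tfrac{1}{2}\log_{k-1} n - 1$ with the girth bound $g(G) > \log_{k-1} n$, any such $U$ may be taken connected and lying in a ball of radius $\le w$ around $e$, which is a tree.

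Now suppose toward contradiction that $F_\pi$ contains a cycle $C$ in $G$ with edges $e_1, \ldots, e_\ell$ and vertices $v_0, \ldots, v_{\ell-1}$ (indices mod $\ell$, with $e_i = \{v_{i-1}, v_i\}$). Since $e_i \in \partial U_i$, exactly one endpoint of $e_i$ lies in $U_i$; WLOG $v_{i-1} \in U_i$ and $v_i \notin U_i$. Walking backward along $C$ from $v_{i-1}$, there is a smallest $s \ge 2$ with $v_{i-s} \notin U_i$ (such $s$ exists because $v_i \notin U_i$). The edge $e_{i-s+1}$ then lies in $\partial U_i$ and differs from $e_i$, so it must be in $S_{<e_i}$; that is, $e_{i-s+1} < e_i$ in $\pi$. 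Defining $\tau(i) := i - s + 1$ (and symmetrically $\tau(i) := i + s - 1$ when $v_i \in U_i$) produces a function $\tau\colon [\ell] \to [\ell]$ with $\tau(i) \neq i$ and $e_{\tau(i)} < e_i$ in $\pi$ for every $i$. Every functional graph on a finite set contains a cycle, and any cycle $i_1 \to i_2 \to \cdots \to i_m = i_1$ of $\tau$ would give $e_{i_1} > e_{i_2} > \cdots > e_{i_m} = e_{i_1}$, a contradiction.

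The main obstacle is the locality step: verifying that every $\pweak{w}$-derivation can genuinely be cast in the ``single connected $U$ inside the local $w$-ball'' form used above. This relies on combining the width bound $w \le \tfrac{1}{2}\log_{k-1} n - 1$ with $g(G) > \log_{k-1} n$ to ensure that a width-$w$ resolution proof on $T(G)$ cannot mix information across different branches of the local tree; once that is in place, the rest of the argument is a clean combinatorial contradiction along the cycle $C$.
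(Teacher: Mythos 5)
There is a genuine gap, and it sits exactly where the whole content of the lemma lies. Your argument hinges on the ``locality step'': that every $\pweak{w}$-forcing of $x_e=0$ can be recast as a connected vertex set $U$ with $\partial U\setminus S_{<e}=\{e\}$ inside a tree-like ball of radius $w$. You flag this yourself as the main obstacle and leave it unproven, but it is not a routine verification --- it is the theorem. The forcing rule here is not unit propagation and not a parity calculus: $F'$ is an arbitrary collection of at most $w$ clauses of the restricted formula, and each Tseitin parity constraint consists of $2^{k-1}$ clauses, of which $F'$ may contain only some. A partial set of a vertex's clauses does not entail that vertex's parity constraint, so there is no a priori ``linear-algebra view'' of the implication; the standard (and the paper's) way to refute a purported implication $F'\models\neg x_e$ is to exhibit an explicit assignment satisfying $F'$ with $x_e=1$. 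Relatedly, your functional-graph argument along the cycle is an overcomplicated form of ``look at the first cycle edge that gets forced,'' which is fine, but it only works once the missing structure theorem is in place.

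The second problem is the dismissal of the bridge clauses. The claim that they ``become unit only once some variable is set to $1$'' confuses $\pweak{w}$ (implication by a subset of at most $w$ clauses) with unit propagation. Bridges can perfectly well occur in an implying set $F'$ --- indeed they must be usable in implications, since without them $T(G)$ does not imply $\neg x_e$ for any cycle edge at all (the cycle space gives other satisfying assignments), so the uniqueness of $\nullv$ and the global implication $F_G\models\neg x_e$ rest entirely on the bridges. One cannot simply delete bridges from $F'$, since that only enlarges the set of models and may destroy the implication. What actually neutralizes them is the placement condition: bridges only join edges at distance at least $g(G)/2-1\ge \tfrac12\log_{k-1}n-1$, while the assignment one flips to $1$ lives on a path of length $O(w)\le \tfrac12\log_{k-1}n-1$, so no bridge has both its edges on the flipped path. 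Your proof never invokes this distance property (nor, really, the interplay $w\le \tfrac12\log_{k-1}n-1$ with the girth beyond ``the $w$-ball is a tree''), which is a strong sign the argument cannot be completed as stated. The paper's route avoids both issues: it extends the restriction so that only a single cycle $C$ remains unset, takes a maximal path $P\ni e$ of vertices mentioned by the Tseitin clauses of $F'$ (which cannot wrap around $C$ since $|F'|\le w$ is much smaller than the girth), extends $P$ by one unmentioned vertex on each side, and sets exactly the edges of $P$ to $1$: interior vertices see two $1$-edges (parity satisfied), the endpoints' constraints are absent from $F'$, and no bridge is violated because $P$ is short --- yielding a model of $F'$ with $x_e=1$. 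If you want to salvage your outline, this path-flipping construction is precisely the missing proof of your locality step, at which point the cycle-ordering argument becomes redundant.
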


\begin{proof}
Suppose for contradiction that $\cl(F,\pweak{w}) \leq \left(1 - \frac{2}{k}\right)N$ collapses.
Then there is a restriction $\rho$ that sets some $\left(1 - \frac{2}{k}\right)N$ 
variables to $0$ such that $F_G|_\rho$ collapses under $\pweak{w}$. Note that $\rho$ leaves
 at least $\frac{2}{k}N = n$ variables (i.e., edge) unset. Obviously this set of edges
 contains a cycle $C$. Let $\sigma := (E \setminus C \mapsto \nullv)$. Clearly $F_G|_{\sigma}$ also collapses.
 This contradicts the next lemma:
 \begin{lemma}
   Let $C \subseteq E$ be a cycle and $\sigma := (E \setminus C \mapsto \nullv)$. 
   Then $F_G|_{\sigma}$ does not collapse
   under $\pweak{w}$.
 \end{lemma}
\begin{proof}
 Suppose it does collapse. Then there exists an edge $e \in C$ and a set
 $F' \subseteq F|_{\sigma}$ of at most $w$ clauses such that $F' \models \neg x_e$.
%
%
The clauses in $F'$ are either coming from the Tseitin part or from the bridges. Consider a path $P = v_1, \ldots, v_s$ of maximum length on which $e$ appears and the vertices of $P$ are mentioned by Tseitin clauses in $F'$. Note that $P$ cannot contain the whole cycle, since otherwise there would be too many clauses in $F'$. Let $v_0$ and $v_{s+1}$ be vertices on $C \setminus P$ connected to $v_1$ and $v_s$, respectively. We extend $P$ by $v_0$ and $v_{s+1}$. Since $w \le \frac{1}{2}\log_{k-1}n - 1$, there is no bridge between any pair of edges appearing on $P$. We can now simply set all the variables in $P$ to 1 and all other variables to 0. This would satisfy $F'$ and yet it sets $x_e$ to 1, contradicting that $F' \models \neg x_e$.
 \end{proof}
 This concludes the proof of Lemma~\ref{lm:acyclic}.
\end{proof}

Below we show that it is possible to obtain similar lower bounds even when $w$ is some function in $n^{O(\epsilon)}$. 

\begin{lemma}
\label{lm:tree}
For every $\epsilon > 0$ and every sufficiently large $n$, any encoding of the all-0 assignment with $w < n^{\frac{\epsilon}{8(k-1)}}$ has length at least $(1 - \frac{2(1+\epsilon)}{k})N$.
\end{lemma}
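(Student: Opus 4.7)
The plan is to mimic the proof of Lemma~\ref{lm:acyclic}, using the extra $\epsilon n$ unset edges to handle the bridge clauses that arise once $w$ exceeds the girth threshold. Assume for contradiction that there is a restriction $\rho$ setting fewer than $(1-\tfrac{2(1+\epsilon)}{k})N$ edges to $0$ and with $F_G|_\rho$ collapsing under $\pweak{w}$. Write $H\subseteq G$ for the subgraph of unset edges, so $|E(H)|\ge(1+\epsilon)n$, and let $H_2$ be its $2$-core, which preserves the excess $|E(H_2)|-|V(H_2)|\ge \epsilon n$.

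The first observation is that this excess is robust: for every vertex set $S\subseteq V$ with $|S|\le w$,
\[
  |E(H_2\setminus S)|-|V(H_2\setminus S)|\ \ge\ \epsilon n-(k-1)w\ >\ 0,
\]
because $w<n^{\epsilon/(8(k-1))}$ is negligible compared with $\epsilon n/(k-1)$ for large $n$. In particular $H_2\setminus S$ always contains a cycle, and a quantitative strengthening (e.g.\ via Erd\H{o}s--P\'osa) produces $\nu=\Omega(\epsilon n/\log n)\gg w$ vertex-disjoint cycles inside $H_2$. As in Lemma~\ref{lm:acyclic}, because forcings only get weaker under further restriction, it suffices to exhibit some extension $\sigma\supseteq\rho$ for which $F_G|_\sigma$ does not collapse; the natural choice is to take $\sigma$ so that its unset edges form exactly a family $H^\ast=C_1\sqcup\cdots\sqcup C_\nu\subseteq H_2$ of vertex-disjoint cycles.

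One then reruns the inner forcing argument of Lemma~\ref{lm:acyclic} essentially unchanged: suppose $F'\subseteq F_G|_\sigma$ with $|F'|\le w$ and $F'\models\neg x_e$ for some $e\in C_i\subseteq H^\ast$; write $S$ for the set of vertices mentioned by Tseitin clauses in $F'$. Take the maximal path $P\subseteq C_i$ through $e$ whose internal vertices lie in $S$, extend it by one vertex at each end of $C_i$ to obtain $P'$, and set every variable on $P'$ to $1$ and every other variable to $0$. The same case analysis as in Lemma~\ref{lm:acyclic} shows that every Tseitin clause of $F'$ is satisfied.

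The main obstacle is the bridge clauses: in Lemma~\ref{lm:acyclic} the hypothesis $w\le\tfrac12\log_{k-1}n-1$ forced $P'$ to have $G$-diameter strictly less than $\tfrac12 g(G)-1$, so no two of its edges were at bridge distance, whereas in the present regime $w\gg g(G)$, so $P'$ may well span a $G$-distance $\ge\tfrac12 g(G)-1$ and so violate a bridge clause of $F'$. To get around this one uses the abundance of disjoint cycles: a bridge clause $\{a,b\}\in F'$ can affect the flipping construction on the cycle $C_j$ only if both $a$ and $b$ lie on $C_j$, and by vertex-disjointness every bridge of $F'$ therefore obstructs at most one cycle. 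Hence the at most $w$ bridges of $F'$ spoil at most $w$ of the $\nu\gg w$ cycles, and an averaging argument over cycles (under the collapse hypothesis each cycle of $H^\ast$ must contain a forced edge, but on an unspoiled cycle the path-flip certificate just constructed actually witnesses \emph{non}-forcing) yields the desired contradiction, so $F_G|_\sigma$ and hence $F_G|_\rho$ do not collapse. The bound $w<n^{\epsilon/(8(k-1))}$ is precisely tuned so that the Erd\H{o}s--P\'osa count of disjoint cycles in $H_2$ comfortably dominates $w$, making this averaging step work.
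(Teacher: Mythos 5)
Your approach is genuinely different from the paper's, and it has a real gap. The paper does not extract a union of disjoint cycles; instead (via Lemma~\ref{lm:slender}) it passes to an induced subgraph $H$ of minimum degree $\ge 2$ with no slender path of length $\ge 2/\epsilon$, and then grows from the forced edge $e=(u,v)$ two rooted trees $T_u,T_v\subseteq H$ in which every downward path of length $2/\epsilon$ passes a branching vertex. Each root-to-leaf path therefore has $\tfrac{\epsilon}{8}\log_{k-1}n$ branchings at total depth only $O(\log_{k-1}n)$, so the trees have more than $w<n^{\epsilon/(8(k-1))}$ leaves and one can pick $u'\in T_u$, $v'\in T_v$ untouched by $F'$. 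The crucial point is that the resulting $u'$--$v'$ flipping path has length at most $\tfrac12\log_{k-1}n$, strictly below the bridge threshold, so no bridge clause can be violated by the flip.

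Your disjoint-cycle reduction has no analogue of this shortening. On a cycle $C_i$, the flipping path $P'$ is the contiguous arc around $e$ of Tseitin-mentioned vertices of $F'$; nothing keeps it short, and it can have up to $w\approx n^{\epsilon/(8(k-1))}$ edges, far beyond the bridge distance $\tfrac12 g(G)-1\approx\tfrac12\log_{k-1}n$. Then $F'$ may well contain a bridge clause with both endpoints on $P'$, and the flip is not a satisfying assignment for $F'$. The averaging over cycles does not repair this: during the iterative collapse each cycle $C_j$ is refuted against its own certificate $F'_j$ (the one witnessing the first forcing on $C_j$), and each $F'_j$ may spend all of its at most $w$ bridge clauses on pairs of edges lying on $C_j$ itself. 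The count ``at most $w$ spoiled cycles'' would only be valid if a single fixed $F'$ had to be defeated on all $\nu$ cycles simultaneously, which is not the situation. In fact the difficulty is even more basic: Erd\H{o}s--P\'osa may return cycles of length as small as the girth $\approx\log_{k-1}n\ll w$, and on such a short cycle a genuine $w$-forcing certificate exists (all $|C_j|$ cycle Tseitin clauses together with one internal bridge already entail $x_e=0$), so no flipping argument could possibly refute it. Without something like the paper's branching-tree device to produce a short evading path, the disjoint-cycle approach does not close.
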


\begin{proof}

Let $S$ be the set of edges appearing in any encoding of the all-0 assignment. We will show that $|E \setminus S| < (1 + \epsilon)n$. Assume for a contradiction that $|E \setminus S| \ge (1 + \epsilon)n$. We will show that $E \setminus S$ contains a large subgraph which is expanding in a certain sense.

\begin{definition}
In a graph $G$ we say that a path $P = v_1, \ldots, v_t$ is {\it slender} if for all $1 \le i \le t$ we have $d(v_i) \le 2$.
\end{definition}

\begin{lemma}\label{lm:slender}
Let $G = (V, E)$ be a graph on $n$ vertices such that $|E| \ge (1 + \epsilon)n$ for some $\epsilon > 0$. There exists an induced subgraph $H \subseteq G$ on at least $\Omega(\epsilon^{3/4}n^{1/4})$ vertices with $\delta(G) \ge 2$ with no slender path of length $\geq 2/\epsilon$.
\end{lemma}

\begin{proof}
See the appendix for a proof.
%
\end{proof}

Applying Lemma \ref{lm:slender} on $E \setminus S$ we obtain a subgraph $H$ with minimum degree at least 2 which does not contain any slender path of length $\geq 2/\epsilon$. Setting all edges outside of $H$ to 0, we obtain that there exists a set of at most $w$ clauses $F'$ in the restricted formula which implies $x_e = 0$ for some $e \in H$. Let $e = (u, v)$. We will construct a tree $T_{uv}$ in $H$ by growing two disjoint rooted trees $T_u$ and $T_v$, starting at $u$ and $v$, respectively. The crucial requirement is that in both $T_u$ and $T_v$ any path of length $\geq 2/\epsilon$ that goes downwards in the rooted tree there exists a vertex of degree $\geq 3$. We call such a vertex a {\it branching} vertex. Furthermore, in $T_{uv}$ the distance between the first branching vertices in $T_u$ and $T_v$ is at most $2/\epsilon$. Using the fact that the minimum degree in $H$ is at least 2 and it does not contain any slender path of length $2/\epsilon$ and that the girth is at least $\log_{k-1}n$, we can easily construct $T_{uv}$ so that each root to leaf path in both $T_u$ and $T_v$ has $\frac{\epsilon}{8} \log_{k - 1}n$ branching vertices. Since the horizon $w < n^{\frac{\epsilon}{8(k-1)}}$, there are vertices $u'$ and $v'$ in $T_u$ and $T_v$, respectively, that are not mentioned in $F'$. Consider the unique path between $u'$ and $v'$ in $T_{uv}$. Note that this path has length at most $\frac{1}{2}\log_{k-1}n$. However, since we put bridges only between edges of distance more that $\frac{1}{2}\log_{k-1}n$, there is no bridge between any pair of edges on this path. Setting all edges on the path including $e$ to 1 and everything else to 0 satisfies $F'$, contradicting to $F' \models \neg x_e$.
\end{proof}

Lemma \ref{lm:acyclic} implies that 
$\cl(F_G, \pweak{w}) \geq  (1 - \frac{2}{k})N$  for $w \le \frac{1}{2}\log_{k-1}n - 1$.
Similarly, Lemma \ref{lm:tree} implies that 
$\cl(F_G, \pweak{w}) \geq  (1 - \frac{2(1+\epsilon)}{k})N$ for 
$w < n^{\frac{\epsilon}{8(k-1)}}$.
This completes the proof of Theorem \ref{thm:weak}.



\section{Hard Linear Formulas for Strong PPSZ: Proof of Theorem \ref{theorem-linear-strong}}
\label{section-strong}

\dominik{Required notation: $A_U$, vectors as sets, $F|_{S \mapsto \b}$}

\dominik{Maybe we should explain this already in the introduction.}
Suppose $A \in \mathbf{F}^{m \times n}$ is a matrix in which every row
has Hamming weight at most $k$. Then the system $A \cdot \x = \nullv$
consists of $m$ linear equations over $n$ variables, each of which involves at most
$k$ variables. One can encode it as a $k$-CNF formula with
$2^{k-1} \cdot m$ clauses. Let us denote this formula by $F_A$.
A CNF formula which in this way encodes a system of linear
equations will be called a {\em linear CNF} formula.

\subsection{Robust Expanding Matrices}
\label{section-code-matrices}

As often in the realm of resolution, our proof of hardness relies on a certain
notion of expansion. Loosely speaking, a matrix $A$ is a {\em robust expander}
if for every ``sufficiently large'' submatrix $A_U$ and every ``sufficiently diverse''
set of row vectors $\u_1,\dots,\u_\ell$ at least one of the vectors 
$\u_i \cdot A_U$ has ``large'' Hamming weight. We will now define this notion formally. Throughout this section, let $k \in \N$ be arbitrary but fixed (this
is the $k$ for which we want to construct hard $k$-CNF formulas).
A sequence $\u_1,\dots,\u_\ell \in \F_2^n$ is 
{\em well-increasing} if $n/k \leq |\u_i \setminus (\u_1 \cup \dots \cup \u_{i-1})| \leq 4n/k$
for every $1 \leq i \leq \ell$. This is what we mean by ``sufficiently diverse''.
%
 
 \begin{definition}[Robust Expanders]
  A matrix $A \in \F_2^{n \times n}$ is called a $t$-robust $(\ell,w)$-expander
  if for every $U \subseteq [n]$ of size $t$ and well-increasing
  sequence $\u_1\,\dots, \u_{\ell}$, there is some $1 \leq i \leq \ell$ such that
  $|\u_i \cdot A_U | > w$. 
 \end{definition}
 
 \begin{theorem}[Robust Expanders Are Hard]
   Let $t,w \in \N$, $w \geq 2n/k$, and $\ell := \floor{\frac{k \cdot t}{4n}}$. If $A$ is a $t$-robust
   $(\ell,w)$-expander, then $\cl(F, \pstrong{w},\nullv) \geq n-t$.
   \label{theorem-expanders-are-hard}
 \end{theorem}
 
 \begin{theorem}[Robust Expanders Exist]
   For every sufficiently large $n$, there is a matrix $A \in \F_2^{n \times n}$
   such that (1) every row of $A$ has Hamming weight at most $k+1$; 
   (2) the rank of $A$ is at least $ n - 2 \log(n)$; (3) $A$ is a $t$-robust
   $(\ell,w)$-expander for $t = \frac{\ttconstant \cdot \log(k)}{k} \cdot n$,
   $\ell = \floor{\frac{k \cdot t}{4n}}$, and $w = 2n/k$.
   \label{theorem-robust-expander}
 \end{theorem}
 
With these theorems we can prove Theorem~\ref{theorem-linear-strong} for strong PPSZ.
Write $t = \frac{\ttconstant \cdot \log(k)}{k} \cdot n$ and 
let $A$ be a matrix as promised by Theorem~\ref{theorem-robust-expander}.
By Theorem~\ref{theorem-expanders-are-hard} we know that
$\cl(F_A,\pstrong{2n/k}, \nullv) \geq n - t$.
The Steinitz exchange lemma from linear algebra gives us $2 \log(n)$ 
unit row vectors that we can add to $A$ to obtain a matrix
$A' \in F_2^{(n + 2 \log(n)) \times n}$ of row rank $n$. 
This means that $F_{A'}$ has the unique satisfying assignment
$\nullv$. Each added unit row vector in $A'$ is a unit clause in $F_{A'}$. It can easily
be verified that adding a unit clause reduces $\cl$ by at most $1$. Therefore
$\cl(F_{A'}, \pweak{n/k},\nullv) \geq \cl(F_A, \pweak{n/k},\nullv) - 2\log(n) \geq n - t - 2 \log(n)$.
 This proves
 Theorem~\ref{theorem-linear-strong}.

\begin{proof}[of Theorem~\ref{theorem-expanders-are-hard}]
Let $P$ be the strong proof heuristic which performs resolution of width
up to $w$. We assume that $\cl(F_A,P,\nullv) \leq n-t$ and will derive a contradiction
to the assumption that $A$ is a robust expander. 

By Proposition~\ref{prop-ppsz-collapse} and~\ref{prop-resolution-collapse},
$\cl(F_A,P,\nullv) \leq n-t$
 means that there is a partition $[n] = U \uplus S$ with $|U| = t$
such that $F' \vdash_P (x_i = 0)$ for every $i \in U$, where
$F' := {F_A}|_{S \mapsto \nullv}$ is the formula obtained from $F$ by setting
every variable in $S$ to $0$.
For notational simplicity assume $U = \{1,\dots,t\}$.
By a connection between resolution and linear algebra which is folklore by now
(see e.g. \cite{BSI10}), the fact that $F' \vdash_P (x_i = 0)$ means the following:
\begin{proposition}[Connection Between Resolution and Linear Algebra]
For every $i \in U$   there exists
   a binary tree $T_i$ in which every node $v$ is labeled with a row vector 
   $\r_v \in \F_2^n$ 
   such that:
   \begin{enumerate}
     \item for a leaf $v$, the label $\r_v$ is a unit vector,
     \item if $v$ is an inner node and $v_0,v_1$ are its children then
     $\r_v = \r_{v_0} + \r_{v_1}$.
     \item $|\r_v \cdot A_U| \leq w$ for every node $v$ of $T_i$,
     \item $\r_{\rm root} \cdot A_U  = \e_i$.
   \end{enumerate}
   We  call $T_i$ the resolution tree of $x_i$.
   \label{prop-linear-alg-resolution}
\end{proposition}

For $ i \in \{1,\dots,t\}$ let $\r_i$  be the root labels of the tree $T_i$. Since
$\r_i \cdot A_U = \e_i$ we conclude that the vectors $\r_1,\dots,\r_t$ are linearly
independent. In particular this means that $|\r_1 \cup \dots \cup \r_t| \geq t$. Equipped
with these observations and the previous proposition, we can now construct a well-increasing
 sequence $\u_1,\dots,\u_{\ell^*}$ 
with $\ell^* := \floor{\frac{k \cdot t}{4n}}$ and $| \u_i \cdot A_U| \leq w$ for all
$1 \leq i \leq \ell^*$. This will
be a contradiction to the assumption that $A$ is a robust expander.

Start with the empty sequence
and $\ell = 0$. While $\ell < \ell^*$, we
 try to extend the current well-increasing sequence $\u_1,\dots,\u_{\ell}$  by considering two cases. 
For convenience let $\u = \u_1 \cup \dots \cup \u_\ell$. Note that
$\frac{\ell \cdot n}{k} \leq |\u| \leq \frac{4 \ell \cdot n}{k}$.

Case 1. Suppose some vector $\r_i$ among $\r_1,\dots,\r_t$ satisfies $|\r_i \setminus \u| > 2n/k$. 
Recall that $\r_i$ is the root label of the tree $T_i$. We walk from the root of $T_i$
to a leaf by always choosing the child $v$ for which the ``weight'' $|\r_v \setminus \u|$ is largest.
Note that this weight is more than $2n/k$ at the root and at most $1$ at 
a leaf. Also, in every step the weight decreases by at most a factor of $2$. Thus we 
find a node $v$ on the path for which $n/k \leq |\r_v \setminus \u| \leq 2n/k$. 
We  set $\u_{\ell + 1} := \r_v$ and 
see that the sequence $\u_1,\dots,\u_{\ell+1}$ is well-increasing by the choice of $\u_{\ell+1}$.
Also, since $\u_{\ell+1}$ is the label of a node in a resolution tree, it holds that
$| \u_{\ell+1} \cdot A_U| \leq w$.

Case 2. Suppose $| \r_i \setminus \u| \leq 2n/k$ for all $1 \leq i \leq t$. 
Since $|(\r_1\cup\dots\cup\r_t) \setminus \u| \geq t - |\u|
\geq t - \frac{4 \ell \cdot n}{k} \geq
t - \frac{4 \ell^* \cdot n}{k } + \frac{4 n }{k} \geq \frac{4n}{k}$,
we can find a subset $I \subseteq [t]$ with $|I| \leq 2n/k$
such that $2n/k \leq \left| \bigcup_{i \in I} \r_i \setminus \u\right| \leq 4n/k$.
If we let $v$ be a random linear combination of the $\r_i, i \in I$, we
see that $\E[ |\boldv \setminus \u|] \geq n/k$. Thus,
there is some vector $\boldv$ which is a linear combination of the $\r_i, i \in I$ and 
$n/k \leq |\boldv \setminus \u| \leq 4n/k$.
Furthermore, 
since $|\r_i \cdot A_U| = |\e_i| = 1$ we get 
$|\boldv \cdot A_U| \leq \sum_{i \in I} | \r_i \cdot A_U| = |I| \leq 2n/k \leq w$.  We can
extend the sequence $\u_1,\dots,\u_{\ell}$ by setting $\u_{\ell+1} = \boldv$.

To summarize, this iteratively constructs a well-increasing sequence $\u_1,\dots,\u_{\ell^*}$
with $|\u_i \cdot A_U| \leq w$. We obtain a contradiction to the 
assumption that $A$ is a robust expander, which completes the proof.
\qed
\end{proof}

\subsection{Robust Kernel Expanders Exist---Proof of 
Theorem~\ref{theorem-robust-expander}}
\begin{proof}[Proof of Theorem~\ref{theorem-robust-expander}]
  We will show that a matrix $A$ sampled from a suitable probability distribution
  is a $t$-robust $(\ell,w)$ expander with high probability,
  for $t = \frac{\ttconstant \cdot \log(k)}{k} \cdot n$, 
  $\ell = \llconstant \log(k)$, 
   and $w = 2n/k$.
  Note that by definition, this will also be a $t$-robust $(\ell',w)$-expander for 
  every $\ell' \geq \ell$, thus also for $\ell' = \floor{\frac{k \cdot t}{4n}} = 
  \floor{\frac{\ttconstant}{4} \cdot  \log(k) }\geq \llconstant \log(k) = \ell$.
   
  Take a step $k$ random walk in the Hamming cube $\{0,1\}^n$ and let $\mathbf{X}$ 
  be its endpoint. We view $\mathbf{X}$ as a row vector in $\F_2^n$.
  Repeating this experiment $n$ times independently gives $n$ row vectors
  that form a matrix $B \in \F_2^{n \times n}$. Surely each row of $B$
  has Hamming weight at most $k$, and $B$ turns out to be a robust expander.
  Unfortunately its kernel will have dimension $\Theta\pfrac{\log^2(k) n }{k}$ 
  on expectation---too large for our purposes.
  We introduce a nice trick that boosts the rank of $B$.
%
%
%
%

\begin{lemma}
\label{lemma-added-diagonal}
  Let $B \in \F_2^{n \times n}$ be a matrix and let $P$ be a random permutation matrix.
  Then $\E[|\ker(B+P)|] \leq n+1$.
\end{lemma}

The proof of this lemma can be found in the appendix. We set $A := B+P$. 
By Markov's inequality, $|\ker(A)| \leq n^2$ with high probability, and therefore
also $\rank(A) \geq n - 2 \log(n)$ with high probability. Also, each row of $A$
has Hamming weight at most $k+1$.
It remains to show that $A$ has the desired expansion properties.
 First we fix a set $U$ of size $t$ and a well-increasing sequence
 $\u_1,\dots,\u_{\ell}$ and estimate the probability that
 $|\u_i \cdot A_U| \leq w$ for all $i$. For this we 
 need the following fact about random walks in the Hamming cube.
\begin{lemma}[Hamming Cube Mixing Lemma]
     Let $U \subseteq [n]$ and $\z \in \{0,1\}^U$. Let $\x$ be the endpoint of a 
     length $d$ random walk in $\{0,1\}^n$ starting at $\nullv$. Then
     $\Pr[\x_U = \z] \leq 2 \pfrac{ 1 + \left(1 - 2/n\right)^d}{2}^{|U|}$.
     \label{lemma-cube-mixing}
\end{lemma}
In particular if $d \geq n$ and $|U| = t$ is sufficiently large,
 then this probability is at most $2^{ - 2t / 3}$.
From this lemma it is easy to show the following:
\begin{lemma}
Let $\u_1, \dots, \u_{\ell}$ be a well-increasing sequence. Then 
the probability that $|\u_i \cdot A_U| \leq w$ for all $i$ is at most 
$2^{- \frac{2\cdot \ell \cdot t}{3}} \cdot {t \choose \leq w}^{\ell}$.
\label{weak-ppsz-probability}
\end{lemma}
A full proof of this lemma is in the appendix.
To prove the theorem, it remains to do a union bound over the choices of $U \subseteq [n]$
and the well-increasing sequence. The number of ways to choose $U \subseteq [n]$ of 
size $t$ is ${n \choose t} \leq \pfrac{en}{t}^t \leq k^t 
 $.
Bounding the number of well-increasing sequences is more subtle. 

\begin{lemma}
 The number of well-increasing sequences is at most
 $k^{\frac{4 \ell n}{k}} \cdot 2^{\frac{4 \ell^2}{k} \cdot n}$.
\end{lemma}
\begin{proof}
First, write
 $\u := \u_1 \cup \dots \cup \u_{\ell}$ and note that $| \u | \leq \frac{4 \ell n}{k}$. Thus, the 
number of possible $\u$ is at most 
${n \choose \leq {\frac{4 \ell n}{k}}} \leq k^{\frac{4 \ell n}{k}}$
Once 
we have chosen $\u$, there are at most $2^{| \u|}$ choices for each 
individual $\u_i$ and at most $2^{\ell \cdot |\u|}
\leq 2^{\frac{4 \ell^2}{k} \cdot n}$
well-increasing sequences. 
\qed
\end{proof}

Let us now multiply (1) the number of choices for $U$, (2) the number of well-increasing
sequences, and (3) for a fixed $U$ and well-increasing sequence $\u_1,\dots,\u_\ell$,
the probability that $|\u_i \cdot A_U| \leq w$. We see that this is at most
\begin{align*}
 & k^t \cdot k^{\frac{4 \ell n}{k}} \cdot 2^{\frac{4 \ell^2}{k} \cdot n}
 \cdot 2^{- \frac{2\cdot \ell \cdot t}{3}} \cdot {t \choose \leq w}^{\ell} \\
& = 
 2^{ \frac{\log^2(k) \cdot n}{k} \cdot \left( \ttconstant  + 4 \cdot \llconstant
+ 4 \cdot \llconstant^2 - \frac{2}{3} \llconstant \cdot \ttconstant  + 2\cdot  \llconstant \right)}  = o(1) \ .
\end{align*}
Here we used ${t \choose \leq w} \leq \pfrac{et}{w}^w \leq k^{2n/k}$. 
We conclude that $A$ has the desired expansion properties with high probability.
In addition, it has rank at least $n - 2 \log(n)$, and every row has Hamming 
weight at most $k+1$. This concludes the proof.
\end{proof}

\newpage

\bibliographystyle{alpha}
\bibliography{ref}

\appendix

\section{Remaining Proofs from Section~\ref{section-tseitin}---Improving the Parameters for Weak PPSZ}

 \begin{oneshot}{Proposition~\ref{prp:tseitin-unique}}
 $F_G$ has the unique satisfying assignment $\nullv$.
 \end{oneshot}
\begin{proof} 
For an assignment $\alpha$ let $G_\alpha$ denote the spanning subgraph of  $G$ containing
the edges $e$ with $\alpha(e) = 1$. Note that $\alpha$ satisfies $T(G)$ if and only if
$G_\alpha$ is {\em even}, i.e., every vertex has even degree. There are two
cases: either $G_\alpha$ is the empty graph, in which case $\alpha = \nullv$ and 
satisfies $F_G$, too. Or $G_\alpha$ contains a cycle $C$, which has length at least $g(G)$
and therefore contains a bridge. In this case, $\alpha$ violates $B$.  
%
%
%
%
\end{proof}

\begin{oneshot}{Lemma~\ref{lm:slender}}
Let $G = (V, E)$ be a graph on $n$ vertices such that $|E| \ge (1 + \epsilon)n$ for some $\epsilon > 0$. There exists an induced subgraph $H \subseteq G$ on at least $\Omega(\epsilon^{3/4}n^{1/4})$ vertices with $\delta(G) \ge 2$ with no slender path of length $2/\epsilon$.
\end{oneshot}

\begin{proof}
Let  $r = 2/\epsilon$. We first find a subgraph of minimum degree at least 2 on at least $\Omega(\sqrt{n/r})$ vertices with many edges. To do this we can remove vertices of degree at most 1 at a time. Having removed $t$ vertices we are left with a graph on $n - t$ vertices and at least $(1 + \frac{2}{r})n - t$ edges. It holds that $(1 + \frac{2}{r})n - t \ge (1 + \frac{2}{r})(n - t)$. As the remaining graph has at most $n - t \choose 2$ edges we have $(1 + 2/r)n \le {n - t \choose 2} + t$. This implies $t \le n -  \Omega(\sqrt{n/r})$. Let $n' = n - t$. We thus have $n' \ge \Omega(\sqrt{n/r})$.

If the remaining graph has no slender path of length $r$ we are done. Otherwise let $v_1, \ldots, v_{t_1}$ be a maximal slender path, i.e., $d(v_i) = 2$ for all $1 \le i \le t_1$ and $v_1$ and $v_{t_1}$ have a neighbor (possibly the same) outside $P$ of degree at least 3. We remove $v_1, \ldots, v_{t_1}$ from the graph. If there are any vertices of degree 1 we remove them one at a time until there are no more such vertices. Let the total number removed vertices be $t'_1$. We repeat this for $d$ rounds until there are no more slender paths of length $r$ and all vertices have degree at least 2. Let $t_i$ and $t'_i$ be defined similarly for the $i$th iteration. We have $t'_i \ge r$ and thus $d \le n'/r$.  Note that the total number of removed edges is $t'_1 + \ldots + t'_d + d$ and hence at most $n'+ n'/r$. We are left with a graph with at least $n'/r$ edges and hence at least $\Omega(\sqrt{n'/r}) = \Omega(\epsilon^{3/4}n^{1/4})$ vertices.
\end{proof}

\section{Strong PPSZ---Remaining Proofs for Section~\ref{section-code-matrices}}

\begin{oneshot}{Lemma~\ref{lemma-added-diagonal}}
  Let $B \in \F_2^{n \times n}$ be a matrix and let $P$ be a random permutation matrix.
  Then $\E[|\ker(B+P)|] \leq n+1$.
\end{oneshot}

\begin{proof}
  The kernel of a matrix $A \in \F_2^{n \times n}$ is the set $\{\x \in \F_2^n \ | \ A \cdot \x = \nullv\}$.
  With linearity of expectation we calculate:
  \begin{align*}
  \E[|\ker(B+P)|] & = \sum_{\x \in \F_2} \Pr[ (B+P) \cdot \x = 0] \\
  & = \sum_{\x \in \F_2} \Pr[B \cdot \x = P \cdot \x]
  \end{align*}
  Note that $B \cdot \x$ is a fixed vector whereas $P \cdot \x$ is a uniformly distributed
  over all vectors of weight $|x|$. Thus, the probability that this happens to be $B \cdot \x$
  is exactly ${n \choose |\x|}^{-1}$ if $|B \cdot \x| = |\x|$ and $0$ otherwise. Thus the above is at most
  \begin{align*}
    \sum_{w = 0}^n \sum_{\x \in \F_2^n : |\x| = w} {n \choose w}^{-1} 
     = n+1 \ .
  \end{align*}
  \qed
\end{proof}

\begin{oneshot}{Lemma~\ref{weak-ppsz-probability}}
Let $\u_1, \dots, \u_{\ell}$ be a well-increasing sequence. Then 
the probability that $|\u_i \cdot A_U| \leq 2n/k$ for all $i$ is at most 
$2^{- \frac{2\cdot \ell \cdot t}{3}} \cdot {t \choose 2n/k}^{\ell}$.
\end{oneshot}

  \begin{proof}
    Let $\mathcal{E}_j$ be the event that $|\u_i \cdot A_U| \leq 2n/k$
    for all $1 \leq i \leq j$. We want to bound $\Pr[\mathcal{E}_\ell] = 
    \prod_{j=1}^\ell \Pr[|\u_j \cdot A_U| \leq 2n/k \ | \ \mathcal{E}_{j-1}]$. 
    The lemma will follow directly from this claim:
   \paragraph{Claim.}
      For each $1 \leq j \leq \ell$ the probability
      $\Pr[|\u_j \cdot A_U| \leq 2n/k \ | \ \mathcal{E}_{j-1}]$ is at most
      $2^{-2t/3} \cdot {t \choose \leq 2n/k}$.
      	 \medskip
     
   {\em Proof of the claim.} We divide $\u_i$ into an ``old part'' $v_i$ and a ``new part'' $w_i$.
   Formally, we write $\boldv_i = \u_i \cap (\u_1 \cup \dots \u_{i-1})$
   and $\w_i = \u_i \setminus (\u_1 \cup \dots \u_{i-1})$. We 
   know that $|\w_i| \geq n/k$ since the sequence is well-increasing. Also,
   $\u_i = \boldv_i + \w_i$.
   Let $\y \in \F_2^t$ be a fixed vector. Note that
   \begin{align*}
   \Pr[\u_j \cdot A_U = \y \ | \ \mathcal{E}_{j-1}]  = 
      \Pr[\w_j \cdot A_U = \boldv_j \cdot A_U  + \y \ | \ \mathcal{E}_{j-1}]
   \end{align*}
   Now $\boldv_{j} \cdot A_U$ and $\mathcal{E}_{j-1}$ both only depend on the
   rows $\a_h$ of $A$ with $h \in \boldv_j$, and
    $\w_{j} \cdot A_U$ is independent these. Thus, it suffices
   to bound $\Pr[\w_j \cdot A_U = \z]$ for some unknown but fixed vector $\z$.
   Remember that $A = B+P$ where $P$ is a random $n \times n$ permutation matrix.
   \begin{align*}
   \Pr[\w_j \cdot A_U = \z] & = \Pr[ \w_j \cdot B_U = \z + P_U \cdot \w_j]
   \end{align*}
   What is the distribution of $\w_j \cdot B_U$? It is the sum of
   $|\w_j|$ rows of $B_U$ and thus distributed like the endpoint of a 
   $|\w_j| \cdot k \geq n$ step random walk in $\{0,1\}^n$ starting at $\nullv$ and 
   then projected to the coordinates in $U$.
   By  the Hamming Cube Mixing Lemma
   (Lemma~\ref{lemma-cube-mixing})  with $d = n$ we get
   \begin{align*}
   \Pr[\w_j \cdot A_U = \z] & \leq 2 \pfrac{1 + \left(1 - \frac{2}{n}\right)^n}{2}^{t}
   \leq 2^{-2t/3} \ .
   \end{align*}
   We conclude that $\Pr[\u_j \cdot A_U = \y \ | \ \mathcal{E}_{j-1}] \leq
   2^{-2t/3}$ for every fixed $\y \in \F_2^t$. Therefore
   \begin{align*}
   \Pr[|\u_j \cdot A_U| \leq 2n/k \ | \ \mathcal{E}_{j-1}] & \leq
  2^{-2t/3} \cdot  {t \choose \leq 2n/k}  \ .
   \end{align*}
   This proves the claim. Via the chain rule, the claim
   immediately implies the lemma. \qed
  \end{proof} 

\section{Random Walks on the Hamming Cube}

\begin{lemma}[Hamming Cube Mixing Lemma]
     Let $U \subseteq [n]$ and $\z \in \{0,1\}^U$. Let $\x$ be the endpoint of a 
     length-$d$ random walk in $\{0,1\}^n$ starting at $\nullv$. Then
     \begin{align*}
     \Pr[\x_U = \z] \leq 2 \pfrac{ (1 + \left(1 - \frac{2}{n}\right)^d}{2}^{|U|} \ .
     \end{align*}
     \label{lemma-random-walk}
\end{lemma}
%
\begin{proof}
Let $Q$ be the random walk matrix of the $n$-dimensional
Hamming cube. That is, $Q_{\vec{x},\vec{y}} = 1/n$ 
if $\vec{x}$ and $\vec{y}$ have Hamming distance $1$, and $0$
otherwise. Note that $Q$ is a $(2^n \times 2^n)$-matrix, i.e.,
it takes as input vectors of dimension $2^n$, or equivalently,
functions from $\ftwo^n$ to $\mathbf{R}$.
If $f: \ftwo^n \rightarrow [0,1]$ is a 
probability distribution over $\ftwo^n$, then $Q^d f$ is the 
distribution that we get when sampling $\vec{x} \sim f$ 
and performing a random walk of length $d$. Let $f$ be the function
that is $1$ at $\vec{0}$ and $0$ elsewhere. For  $\vec{X}$ being the 
endpoint of an $d$-step random walk starting at $\vec{0}$, it holds that
$$
\Pr[\vec{X} = \vec{y}] = (Q^t f)(\vec{y}) \ .
$$

Fortunately, we can understand $Q^t f$,
since we know the eigenvalues of $Q$:
The Hamming cube is the Cayley graph of the additive group of $\ftwo^n$
with generating set $\{\vec{e}_1,\dots,\vec{e}_n\}$.
The reader who could not make sense of this last sentence 
may read the next couple of paragraphs. The reader who is familiar
with Cayley graphs and the discrete Fourier transform can skip them.
\begin{definition}
  For $S \subseteq [n]$, define $\chi_S: \ftwo^n
  \rightarrow \mathbb{R}$ by
  $$
  \chi_S (\vec{x}) := (-1)^{\sum_{i \in S} x_i} \ .
  $$
\end{definition}
One checks that the $\chi_S$ form an orthonormal basis
of the space of functions $\ftwo^n \rightarrow \mathbb{R}$
when we choose the following inner product:
$$
\scalar{f}{g} := \E_{\vec{x} \in \ftwo^n} [f(\vec{x})g(\vec{x})] \ .
$$
Each $\chi_S$ is an eigenvector of $Q$:
\begin{eqnarray*}
  (Q \cdot \chi_S)(\vec{x}) & = &
  \sum_\vec{y} Q_{\vec{x},\vec{y}} \chi_S(\vec{y})
  = \sum_{\vec{y}: d_H(\vec{x},\vec{y})=1} \frac{1}{n}
  \chi_S(\vec{y})\\
  & = & \sum_{i=1}^n \frac{1}{n} \chi_S(\vec{x}+ \vec{e}_i) 
   =  \sum_{i=1}^n \frac{1}{n} \chi_S(\x) \chi_S(\vec{e}_i) \\
  & = & \chi_S(\x) \frac{1}{n} \sum_{i=1}^{n} \chi_S(\vec{e}_i) 
\end{eqnarray*}
So $\lambda_S := \frac{1}{n} \sum_{i=1}^n \chi_S(\vec{e}_i)$
is the eigenvector of $\chi_S$. Let us evaluate $\lambda_S$:
\begin{eqnarray*}
  \lambda_S & = & \frac{1}{n} \sum_{i=1}^n \chi_S(\vec{e}_i)
   =  \frac{1}{n} \sum_{i=1}^n (-1)^{[i \in S]} 
   =  \frac{1}{n} ( n - 2|S|) = 1 - \frac{2|S|}{n} \ .
\end{eqnarray*}
Let $f: \ftwo^n \rightarrow \mathbb{R}$ 
be the function that is $1$ on $\vec{0}$ and $0$ otherwise.
To understand $Q^t f$, we write $f$ in the basis of the 
eigenvectors of $Q$. Since the $\chi_S$ are orthonormal
under the scalar product $\scalar{\cdot}{\cdot}$, we can write
$$
f = \sum_{S \subseteq [n]} \hat{f}_S \chi_S \ ,
$$
where the coefficients $\hat{f}_S$ are
$$
\hat{f}_S := \scalar{f}{\chi_S} = 
\E_{\vec{x} \in \ftwo^n}[ f(\vec{x}) \chi_S(\vec{x})] 
=  2^{-n} ,
$$
since $\vec{x}=\vec{0}$ is the only element that contributes to
the expectation. Thus,
\begin{eqnarray*}
Q^t f & = & Q^t \left(\sum_S \hat{f}_S \chi_S\right) 
= Q^t \left(\sum_S 2^{-n} \chi_S \right) 
 = 2^{-n} \sum_S \lambda_S^t \chi_S  \ .
\end{eqnarray*}

For $\vec{y} \in \ftwo^n$, let us bound the probability 
$\Pr[\vec{X} = \vec{y}]$: With the above equation, we get
\begin{eqnarray*}
(Q^t f)(\vec{y})  & = & 2^{-n} \sum_{S} \lambda_S^t \chi_S(\vec{y}) 
 =  2^{-n} \sum_{S} \left(1 - \frac{2|S|}{n}\right)^t \chi_S(\vec{y})\\
& \leq & 2^{-n} \sum_{s=0}^n {n \choose s} 
\left|1 - \frac{2s}{n}\right|^t  \quad
\textnormal{(since } |\chi_S(\vec{y})| = 1 \textnormal{)} \\
& \leq & 2^{-n} \sum_{s=0}^{\floor{n/2}} {n \choose s} 
\left|1 - \frac{2s}{n}\right|^t 
+ 2^{-n} \sum_{s=\ceil{n/2}}^n {n \choose s} 
\left|1 - \frac{2s}{n}\right|^t \\
& = & 2^{-n} \sum_{s=0}^{\floor{n/2}} {n \choose s} 
\left|1 - \frac{2s}{n}\right|^t 
+ 2^{-n} \sum_{r=0}^{\floor{n/2}} {n \choose n-r} 
\left|1 - \frac{2(n-r)}{n}\right|^t \\
 & =  & 2 \cdot 2^{-n} \sum_{s=0}^{\floor{n/2}} {n \choose s} 
\left(1 - \frac{2s}{n}\right)^t \\
& \leq & 2 \cdot 2^{-n} \sum_{s=0}^{\floor{n/2}} {n \choose s}
\left(1 - \frac{2}{n}\right)^{st}
 \leq  2 \cdot 2^{-n} \sum_{s=0}^{n} {n \choose s}
\left(1 - \frac{2}{n}\right)^{st} \\
& = &  2 \pfrac{1 + \left(1 - \frac{2}{n}\right)^{t}}{2}^n \ .
\label{prob-at-0}
\end{eqnarray*}

This proves the lemma for $U = [n]$. In general, however,
we are interested in the distribution of $\mathbf{X}_U$, i.e.,
$\mathbf{X}$ projected to the coordinates in $U$.

\begin{observation}
  Perform a ``lazy'' random walk on $\{0,1\}^{|U|}$ as follows:
  Start at $\vec{0}$. At each step, take each edge with
  probability $1/n$. With the remaining probability $1-|U|/n$,
  don't move in this step. 
  Then the end point of this walk after $t$ steps 
  has distribution $\vec{X}_{U}$.
\end{observation}

Let $Q$ be transition matrix of 
the random walk on $\{0,1\}^{|U|}$. Then 
$$
\tilde{Q} := \frac{|U|}{n}Q + \frac{n- |U|}{n} I
$$
is the transition matrix of the lazy random walk described above.
For each $S \subseteq U$, $\chi_S$ is an eigenvector of $Q$, 
and the corresponding eigenvalue is 
$\lambda_S = 1 - \frac{2|S|}{|U|}$. The matrix $\tilde{Q}$ has
the same eigenvectors as $Q$, and its eigenvalues are
$$
\tilde{\lambda}_S = \frac{|U|}{n} \lambda_S + \frac{n-|U|}{n} \cdot 1
= 1 - \frac{2|S|}{n} \ .
$$
Let $f: \{0,1\}^{|U|}  \rightarrow \mathbb{R}$ be the function
that is $1$ at $\vec{0}$ and $0$ elsewhere. We write
$f = \sum_{S \subseteq U} \hat{f}_S \chi_S$. Let $u := |U|$. 
By the same calculation as above, $\hat{f}_S = 2^{-u}$. Thus,
for $\vec{y} \in \{0,1\}^{u}$ we get
\begin{eqnarray*}
(\tilde{Q}^t f)(\vec{y})  & = & 2^{-u} \sum_{S} \lambda_S^t \chi_S(\vec{y}) 
 =  2^{-u} \sum_{S} \left(1 - \frac{2|S|}{n}\right)^t \chi_S(\vec{y})\\
& \leq & 2^{-u} \sum_{s=0}^u {u \choose s} 
\left|1 - \frac{2s}{n}\right|^t  \quad
\textnormal{(siuce } |\chi_S(\vec{y})| = 1 \textnormal{)} 
\end{eqnarray*}
If $u \leq n/2$, we observe that all eigenvalues $1 - 2s/n$ are
non-negative.\footnote{The reader might observe that in our application
indeed $|U| \ll n/2$.} In this case we continue:
\begin{eqnarray}
2^{-u} \sum_{s=0}^u {u \choose s} 
\left(1 - \frac{2s}{n}\right)^t  
\leq
2^{-u} \sum_{s=0}^u {u \choose s} 
\left(1 - \frac{2}{n}\right)^{st}
= \pfrac{1 + \left(1 - \frac{2}{n}\right)^t}{2}^{u}
\label{ineq-small-u}
\end{eqnarray}
and we are done. If $u > n/2$, things get more tricky. We split the 
sum in two parts:
\begin{eqnarray}
2^{-u} \sum_{s=0}^{\floor{n/2}} {u \choose s} 
\left(1 - \frac{2s}{n}\right)^t 
+ 2^{-u} \sum_{s=\floor{n/2}+1}^u {u \choose s} 
\left(\frac{2s}{n}-1\right)^t 
\label{two-sums}
\end{eqnarray}

We can bound the first sum exactly similar as in (\ref{ineq-small-u}):
\begin{align*}
2^{-u} \sum_{s=0}^{\floor{n/2}} {u \choose s} 
\left(1 - \frac{2s}{n}\right)^t & \leq
2^{-u} \sum_{s=0}^{\floor{n/2}} {u \choose s} 
\left(1 - \frac{2}{n}\right)^{st} \leq
2^{-u} \sum_{s=0}^{u} {u \choose s} 
\left(1 - \frac{2}{n}\right)^{st} \\
& =  \pfrac{1 + \left(1 - \frac{2}{n}\right)^t}{2}^{u} \ .
\end{align*}
 Let us bound the second sum in (\ref{two-sums}). For
notational convenience, we let it run from $\ceil{n/2}$ to $u$,
only making it larger. We change the parameter $s$ to 
$r := u-s$. Thus 
\begin{align*}
2^{-u} \sum_{s=\ceil{n/2}}^u {u \choose s} 
\left(\frac{2s}{n}-1\right)^t
& = 
2^{-u} \sum_{r=0}^{u-\ceil{n/2}} {u \choose {u-r}} 
\left(\frac{2(u-r)}{n}-1\right)^t\\
& = 
2^{-u} \sum_{r=0}^{u-\ceil{n/2}} {u \choose {r}} 
\left(\frac{2u-n}{n}-\frac{2r}{n}\right)^t\\
& \leq 
2^{-u} \sum_{r=0}^{u-\ceil{n/2}} {u \choose {r}} 
\left(1-\frac{2r}{n}\right)^t \tag{since $u \leq n$} \\
& \leq 
2^{-u} \sum_{r=0}^{u-\ceil{n/2}} {u \choose {r}} 
\left(1-\frac{2}{n}\right)^{rt} \\
& \leq 
2^{-u} \sum_{r=0}^{u} {u \choose {r}} 
\left(1-\frac{2}{n}\right)^{rt}  =  \pfrac{1 + \left(1 - \frac{2}{n}\right)^t}{2}^{u} \  . 
\end{align*}
Thus, both sums in (\ref{two-sums}) are bounded by (\ref{ineq-small-u}) and the lemma follows.
\end{proof}

\end{document}
\grid
\grid